\newtheorem{lemma}{Lemma}
\newtheorem{proposition}{Proposition}
\newtheorem{remark}{Remark}
\long\def\symbolfootnote[#1]#2{\begingroup%
	\def\thefootnote{\fnsymbol{footnote}}\footnote[#1]{#2}\endgroup}
\newtheorem{theorem}{Theorem}
\newtheorem{definition}{Definition}
\newcommand{\dv}{\mathbf} 
\newcommand{\mc}{\mathcal} 
\newcommand{\mb}{\mathbf} 
\newcommand{\mkv}{-\!\!\!\!\minuso\!\!\!\!-}
\newcommand{\bqed}{\tag*{$\blacksquare$}}
\newcommand*{\qedblack}{\hfill\ensuremath{\blacksquare}}
\newcommand{\squeezeup}{\vspace{-2em}}
\newcommand{\squeezeuptwo}{\vspace{-1.5em}}
\newcommand{\E}{\mathbb{E}}
\begin{document}
\fontencoding{OT1}\fontsize{9.4}{11}\selectfont

\title{Vector Gaussian CEO Problem Under \\ Logarithmic Loss} 
\author{
Yi{\u{g}}it U{\u{g}}ur $^{\dagger}$$^{\ddagger}$ \qquad \quad I\~naki Estella Aguerri $^{\dagger}$ \qquad \quad Abdellatif Zaidi $^{\dagger}$$^{\ddagger}$  \\   
{\small
$^{\dagger}$ Mathematics and Algorithmic Sciences Lab., France Research Center, Huawei Technologies, Boulogne-Billancourt, 92100, France\\
$^{\ddagger}$ Universit\'e Paris-Est, Champs-sur-Marne, 77454, France\\ \vspace{-1mm}
\{yigit.ugur@huawei.com,  inaki.estella@huawei.com, abdellatif.zaidi@u-pem.fr\} \vspace{-3em}} 
} 

\maketitle

\begin{abstract} 
In this paper, we study the vector Gaussian Chief Executive Officer (CEO) problem under logarithmic loss distortion measure. Specifically, $K \geq 2$ agents observe independently corrupted Gaussian noisy versions of a remote vector Gaussian source, and communicate independently with a decoder or CEO over rate-constrained noise-free links. The CEO wants to reconstruct the remote source to within some prescribed distortion level where the incurred distortion is measured under the logarithmic loss penalty criterion. We find an explicit characterization of the rate-distortion region of this model. For the proof of this result, we obtain an outer bound on the region of the vector Gaussian CEO problem by means of a technique that relies on the de Bruijn identity and the properties of Fisher information. The approach is similar to Ekrem-Ulukus outer bounding technique for the vector Gaussian CEO problem under quadratic distortion measure, for which it was there found generally non-tight; but it is shown here to yield a complete characterization of the region for the case of logarithmic loss measure. Also, we show that Gaussian test channels with time-sharing exhaust the Berger-Tung inner bound, which is optimal. Furthermore, we also show that the established result under logarithmic loss provides an outer bound for a quadratic vector Gaussian CEO problem with \textit{determinant} constraint, for which we characterize the optimal rate-distortion region.   
\end{abstract}

\vspace{-1.5em}
\section{Introduction}\label{secI}

Consider the vector Gaussian Chief Executive Officer (CEO) problem shown in Figure~\ref{fig-system-model-CEO}. In this model, there is an arbitrary number $K \geq 2$ of agents each having a noisy observation of a vector Gaussian source $\dv X$. The goal of the agents is to describe the source to a central unit, which wants to reconstruct this source to within a prescribed distortion level. The incurred distortion is measured according to some loss measure $d\: :\: \mc X \times \hat{\mc X} \rightarrow \mathbb{R}$, where $\hat{\mc X}$ designates the reconstruction alphabet. For quadratic distortion measure, i.e., 
\begin{equation*}
d(x,\hat{x})=|x-\hat{x}|^2,
\end{equation*}
the rate-distortion region of the vector Gaussian CEO problem is still unknown in general, except in few special cases the most important of which is perhaps the case of scalar sources, i.e., scalar Gaussian CEO problem, for which a complete solution, in terms of characterization of the the optimal rate-distortion region, was found independently by Oohama in~\cite{O05} and by Prabhakaran \textit{et al.} in~\cite{PTR04}. Key to establishing this result is a judicious application of the entropy power inequality. The extension of this argument to the case of vector Gaussian sources, however, is not straightforward as the entropy power inequality is known to be non-tight in this setting. The reader may refer also to~\cite{CW11, WC12} where non-tight outer bounds on the rate-distortion region of the vector Gaussian CEO problem under quadratic distortion measure are obtained by establishing some extremal inequalities that are similar to Liu-Viswanath~\cite{LV07}, and to~\cite{XW16} where a strengthened extremal inequality yields a complete characterization of the region of the vector Gaussian CEO problem in the special case of trace distortion constraint.

\begin{figure}[t!]
\centering
\includegraphics[width=\linewidth]{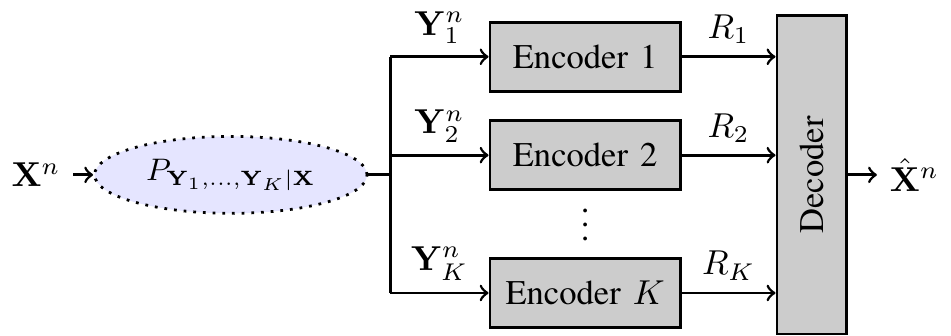}
\squeezeup
\caption{Chief Executive Officer (CEO) source coding problem.}	
\squeezeuptwo
\label{fig-system-model-CEO}
\end{figure}

In this paper, we study the CEO problem of Figure~\ref{fig-system-model-CEO} in the case in which $(\dv X, \dv Y_1,\ldots,\dv Y_K)$ is jointly Gaussian and the distortion is measured using the logarithmic loss criterion, i.e., 
\begin{equation}
d^{(n)}(x^n, \hat{x}^n) = \frac{1}{n} \sum\nolimits_{i=1}^{n} d(x_i,\hat{x}_i),
\label{definition-log-loss-distortion-measure-n-letter}
\end{equation}
with the letter-wise distortion given by  
\begin{equation}
d(x,\hat{x}) = \log\Big(\frac{1}{\hat{x}(x)}\Big),
\label{definition-log-loss-distortion-measure}
\end{equation}
where $\hat{x}(\cdot)$ designates a probability distribution on $\mc X$, and $\hat{x}(x)$ is the value of this distribution evaluated for the outcome $x \in \mc X$. 

The logarithmic loss distortion measure, often referred to as \textit{self-information loss} in the literature about prediction, plays a central role in settings in which reconstructions are allowed to be `soft', rather than `hard' or deterministic. That is, rather than just assigning a deterministic value to each sample of the source, the decoder also gives an assessment of the degree of confidence or reliability on each estimate, in the form of weights or probabilities. This measure, which was introduced in the context of rate-distortion theory by Courtade \textit{et al.}~\cite{CW11-2, CW14}, has appreciable mathematical properties~\cite{JCVW15,NW15}, such as a deep connection to lossless coding for which fundamental limits are well developed (e.g., see~\cite{SRV17} for recent results on universal lossy compression under logarithmic loss that are built on this connection). Also, it is widely used as a penalty criterion in various contexts, including clustering and classification~\cite{TPB99}, pattern recognition, learning and prediction~\cite{C-BL06}, image processing~\cite{AABG06}, secrecy~\cite{KCOSW16} and others.

The main contribution of this paper is a complete characterization of the rate-distortion region of the vector Gaussian CEO problem of Figure~\ref{fig-system-model-CEO} under logarithmic loss distortion measure. The  result can be seen as the counterpart, to the vector Gaussian case, of that by Courtade and Weissman~\cite[Theorem 3]{CW14} who established the rate-distortion region of the CEO problem under logarithmic loss in the discrete memoryless (DM) case. For the proof of this result, we derive an outer bound on the rate-distortion region of the vector Gaussian CEO problem by evaluating the outer bound from the DM model using the de Bruijn identity, a connection between differential entropy and Fisher information, along with the properties of minimum mean square error (MMSE) and Fisher information. By opposition to the case of quadratic distortion measure, for which the application of this technique was shown in~\cite{EU14} to result in an outer bound that is generally non-tight, we show that this approach is successful in the case of logarithmic distortion measure and yields a complete characterization of the region. The proof of the achievability part simply corresponds to the evaluation of the result for the DM model using Gaussian test channels and \textit{no} time-sharing. While this does \textit{not} imply that Gaussian test channels also exhaust the Berger-Tung inner bound for this model, we show that they \textit{do} but might generally require time-sharing. Furthermore, we also show that the established result under logarithmic loss provides an outer bound for a quadratic vector Gaussian CEO problem with \textit{determinant} constraint, for which we characterize the optimal rate-distortion region (see~\cite{PCL03,SSBGS02} for examples of usage of this determinant constraint in the context of equalization).

In the case of one agent, i.e., the remote vector Gaussian Wyner-Ziv model under logarithmic loss, the model was resolved in~\cite{TC09}; and, so, our result here generalizes that of~\cite{TC09} to the case of an arbitrarily number of agents. Related to this aspect, it is also worth mentioning that the orthogonal transform technique which was used in~\cite{TC09} to reduce the vector setting to one of parallel scalar Gaussian settings seems insufficient to diagonalize all the noise covariance matrices simultaneously in the case of more than one agent.     

\textit{Notation:} Throughout, we use the following notation. Upper case letters denote random variables, e.g., $X$; lower case letters denote realizations of random variables, e.g.,  $x$; and calligraphic letters denote sets, e.g., $\mathcal{X}$.  The cardinality of a set $\mc X$ is denoted by $|\mc X|$. A length-$n$ sequence $(X_1,\ldots,X_n)$ is denoted as  $X^n$. Boldface upper case letters denote vectors or matrices, e.g., $\dv X$, where context should make the distinction clear. For an integer $K \geq 1$, we denote the set of integers smaller or equal $K$ as $\mc K$. For a set of integers $\mc S \subseteq \mc K$, the notation $X_{\mc S}$ designates the set of random variables $\{X_k\}$ with indices in the set $\mc S$, i.e., $X_{\mc S}=\{X_k\}_{k \in \mc S}$.

In this paper, due to space limitations some of the proofs are omitted or only outlined. Detailed proofs as well as the extension of the results of this paper to the case in which the decoder also has its own correlated side information stream can be found in~\cite{PaperFull}.

\vspace{-1em}
\section{Problem Formulation}\label{secII}

Consider the $K$-encoder CEO problem shown in Figure~\ref{fig-system-model-CEO}. In this paper, the agents' observations are assumed to be Gaussian noisy versions of a remote vector Gaussian source. Specifically, let $(\dv X, \dv Y_1,\ldots,\dv Y_K)$ be a jointly Gaussian random vector, with zero mean and covariance matrix $\dv \Sigma_{[\dv x, \dv y_1, \ldots, \dv y_K]}$. The vector $ \dv X \in \mathbb{C}^{n_x}$ is complex-valued, and has $n_x \in \mathbb{N}$ dimensions; and vector $\dv Y_k \in \mathbb{C}^{n_k}$, $k=1,\ldots,K$, is complex-valued and has $n_k \in \mathbb{N}$ dimensions. Throughout, it is assumed that the following Markov chain holds
\vspace{-0.15em}
\begin{equation}
\dv  Y_k \mkv  \dv X \mkv \dv Y_{\mathcal{K}/k}, \qquad k = 1,\ldots,K.
\label{eq:MKChain_pmf}
\end{equation}\\[-1.2em]
Let now $\{(\dv X_i, \dv Y_{1,i}, \ldots, \dv Y_{K,i})\}^n_{i=1}$ be a sequence of $n$ independent and identically distributed (i.i.d.) copies of  $(\dv X, \dv Y_1,\ldots,\dv Y_K)$.  Encoder $k$, $k=1,\ldots,K$, observes $\dv Y^n_k := \dv Y^n_{k,1}$. Using~\eqref{eq:MKChain_pmf}, in what follows we assume without loss of generality that  
\vspace{-0.15em}
\begin{equation*}
\mathbf{Y}_{k,i} = \mathbf{H}_{k}\mathbf{X}_i+\mathbf{N}_{k,i},\qquad i = 1,\ldots,n,
\label{mimo-gaussian-model}
\end{equation*}\\[-1.15em]
where $\mathbf{H}_{k} \in \mathbb{C}^{n_k\times n_x}$ designates the channel that connects $\dv X_i$ to $\dv Y_{k,i}$ and $\mathbf{N}_{k,i} \in \mathbb{C}^{n_k}$ is an $n_k$-dimensional, complex-valued, vector Gaussian noise with zero-mean and covariance matrix $\dv\Sigma_k$. All noises $\dv N_{k,i}$ are independent among them, and from $\dv X_i$. 

Encoder $k$, $k=1,\ldots,K$, uses $R_k$ bits per sample to describe its observation $\dv Y^n_k$ to the decoder. The decoder wants to reproduce a soft-estimate of the  remote source $\dv X^n \in \mathbb{C}^{n\times n_x}$. That is, we consider the reproduction alphabet $\hat{\mc X}$ to be equal to the set of probability distributions over the source alphabet $\mathbb{C}^{n\times n_x}$ and the distortion measure is the logarithmic loss criterion as defined by~\eqref{definition-log-loss-distortion-measure-n-letter}.

\begin{definition}
A rate-distortion code (of blocklength $n$) for the CEO problem consists of $K$ encoding functions
\begin{equation*}
\phi^{(n)}_k \: : \:  \mathbb{C}^{n \times n_k} \rightarrow \{1,\ldots,M^{(n)}_k\}, \quad  k=1,\ldots,K,
\end{equation*}
and a decoding function
\begin{equation*}
\psi^{(n)} \: : \:  \{1,\ldots,M^{(n)}_1\} \times \ldots \times \{1,\ldots,M^{(n)}_K\} \rightarrow \hat{\mc X}^n,
\end{equation*}
where $\hat{\mc X}^n$ designates the set of probability distributions over the $n$-Cartesian product of $\mathbb{C}^{n_x}$. \qedblack
\end{definition}

\vspace{-1em}
\begin{definition}
A rate-distortion tuple $(R_1,\ldots,R_K,D)$ is achievable for the vector Gaussian CEO problem if there exist a blocklength  $n$, $K$ encoding functions $\{\phi^{(n)}_k\}_{k=1}^K$ and a decoding function $\psi^{(n)}$ such that
\begin{align*}
R_k &\geq \frac{1}{n}\log M^{(n)}_k, \quad k=1,\ldots,K, \\
D &\geq \mathbb{E}[d(\dv X^n,\psi^{(n)}\big(\phi^{(n)}_1(\dv Y^n_1), \ldots, \phi^{(n)}_K(\dv Y^n_K)\big))].
\end{align*} 
	
\noindent The rate-distortion region $\mc{RD}_{\mathrm{L}}^\star$ of the vector Gaussian CEO problem under logarithmic loss is defined as the union of all non-negative tuples $(R_1,\ldots,R_K,D)$ that are achievable. \qedblack
\end{definition}

\vspace{-0.5em}
One important goal in this paper is to characterize the rate-distortion region $\mc{RD}_{\mathrm{L}}^{\star}$. 

\vspace{-2.5mm}
\section{Vector Gaussian CEO Problem Under Logarithmic Loss}\label{secIII}

\vspace{-0.5mm}
\subsection{Rate-Distortion Region}\label{secIII_subsecA}

The rate-distortion region of the discrete memoryless $K$-encoder CEO problem under logarithmic loss has been fully characterized by Courtade-Weissman in~\cite[Theorem 10]{CW14} in the case in which the Markov chain~\eqref{eq:MKChain_pmf} holds. This result can be extended to the case of Gaussian sources as we stated in the following proposition.

\begin{definition}~\label{def:GaussSumCap_I}
For given tuple of auxiliary random variables $(U_1,\ldots,U_K,Q)$ with distribution $p(u_1,\ldots,u_K,q)$ such that $p(\dv x,\dv y_1,\ldots,\dv y_K,u_1,\ldots,u_K,q)$ factorizes as	
\begin{equation}~\label{joint-measure-auxiliary-random-variables-CEO-problem}
p(q) p(\dv x) \prod\nolimits_{k=1}^K p(\dv y_k|\dv x) \prod\nolimits_{k=1}^K p(u_k|\dv y_k,q),
\end{equation}
$\mc{RD}_\mathrm{L}^\mathrm{I}(U_1,\ldots,U_K,Q)$ denotes the set of all non-negative tuples $(R_1,\ldots,R_K,D)$ that satisfy, for all subsets $\mc S \subseteq \mc K$,
\begin{equation}~\label{eq:GaussSumCap_1} 
\sum\nolimits_{k \in \mathcal{S}} R_k + D \geq \sum\nolimits_{k\in\mathcal{S}} I(\dv Y_{k};U_{k}|\dv X,Q)  + h(\dv X| U_{\mathcal{S}^c},Q). 
\end{equation}
{\fontsize{9.3}{11}\selectfont Also, let $\mc{RD}_\mathrm{L}^\mathrm{I} := \bigcup \mc{RD}_\mathrm{L}^\mathrm{I} (U_1,\ldots,U_K,Q)$ where the union is taken over all tuples $(U_1,\ldots,U_K,Q)$ with distributions that satisfy~\eqref{joint-measure-auxiliary-random-variables-CEO-problem}.} \qedblack
\end{definition}

\begin{proposition}~\label{proposition-continous-RD1-CEO}
$\mc{RD}_\mathrm{L}^\star = \mc{RD}_\mathrm{L}^\mathrm{I}$.  
\end{proposition}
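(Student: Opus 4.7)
The plan is to lift the Courtade--Weissman characterization of the DM CEO problem under logarithmic loss, \cite[Theorem 10]{CW14}, to the vector Gaussian setting. Both the direct and converse parts of that theorem rest on information-theoretic manipulations that remain valid, modulo standard measure-theoretic care, when the source and observations are continuous Gaussian vectors.

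For the direct part, I would invoke a Berger--Tung coding scheme: conditioned on a time-sharing variable $Q$, each encoder $k$ generates a codebook from a test channel $p(u_k | \dv y_k, q)$ satisfying the factorization~\eqref{joint-measure-auxiliary-random-variables-CEO-problem}, quantizes $\dv Y_k^n$ to a codeword, and sends its Wyner--Ziv bin index. The decoder jointly decodes $(U_1^n, \ldots, U_K^n)$ given $Q^n$ and outputs the soft reconstruction whose $i$-th coordinate is the posterior $p(\dv x_i | u_{1,i}, \ldots, u_{K,i}, q_i)$; under log-loss, this choice is optimal and incurs expected distortion $h(\dv X | U_1, \ldots, U_K, Q)$. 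A standard manipulation that uses the Markov chain~\eqref{eq:MKChain_pmf} then rewrites the Berger--Tung sum-rate constraints exactly as in~\eqref{eq:GaussSumCap_1}, for every $\mc S \subseteq \mc K$. Since test channels with continuous alphabets can be approximated arbitrarily well by finite-alphabet ones (or handled directly via typicality arguments over Polish spaces), every tuple in $\mc{RD}_\mathrm{L}^\mathrm{I}$ is achievable.

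For the converse I would follow the Courtade--Weissman argument almost verbatim. Given any length-$n$ code with distortion $D$, set $J_k := \phi_k^{(n)}(\dv Y_k^n)$ and introduce a time-sharing index $Q$ uniform over $\{1, \ldots, n\}$; identify $U_{k,i} := (J_k, \dv Y_k^{i-1})$ and $U_k := U_{k,Q}$, and check that this choice obeys~\eqref{joint-measure-auxiliary-random-variables-CEO-problem}, using the Markov chain~\eqref{eq:MKChain_pmf}. The log-loss admits the sharp lower bound $n D \geq h(\dv X^n | J_1, \ldots, J_K)$, and after single-letterization one obtains $D \geq h(\dv X | U_1, \ldots, U_K, Q)$. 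For the rate side, $n R_k \geq H(J_k) \geq I(\dv Y_k^n ; J_k | \dv X^n)$ single-letterizes, via the chain rule and the i.i.d.\ structure, into $R_k \geq I(\dv Y_k; U_k | \dv X, Q)$. The combinatorial step assembling these pieces into the subset inequality~\eqref{eq:GaussSumCap_1} is the one given in~\cite{CW14}; it is purely entropy-arithmetic and transfers unchanged to the Gaussian case.

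The main obstacle is technical rather than conceptual: in the DM setting all entropies are automatically finite and the Fano-type inequality $\E[d^{(n)}(\dv X^n, \hat{\dv X}^n)] \geq \frac{1}{n} H(\dv X^n | J_1, \ldots, J_K)$ is immediate, whereas here one must ensure that the relevant differential entropies are well-defined and finite, and that the continuous analogue of that inequality is tight. For a jointly Gaussian source with full-rank covariance $\dv \Sigma_{[\dv x, \dv y_1, \ldots, \dv y_K]}$, both $h(\dv X)$ and $h(\dv X | U_1, \ldots, U_K, Q)$ are bounded, and the optimal soft reconstruction under log-loss is the posterior distribution, so the bound is attained. Once these measure-theoretic points are dealt with (for example by a truncate-and-take-limits argument), the DM derivation transfers directly to the Gaussian case and delivers Proposition~\ref{proposition-continous-RD1-CEO}.
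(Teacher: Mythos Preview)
Your proposal is correct and would deliver the result, but the paper takes a genuinely different route around precisely the measure-theoretic obstacle you flag. Rather than lifting the Courtade--Weissman argument directly and confronting the well-definedness of differential entropies head-on, the paper first recasts the problem in terms of a \emph{rate-information} region $\mc{RI}_\mathrm{L}^\star$, defined by the constraint $\Delta \leq \frac{1}{n} I(\dv X^n; \psi^{(n)}(\cdot))$ in place of the distortion constraint. The single-letter characterization of $\mc{RI}_\mathrm{L}^\star$ (Proposition~\ref{proposition-DM-RI-CEO}) involves only mutual information terms, so the standard discretization argument extends it to continuous alphabets without any truncate-and-take-limits machinery. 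The loop is then closed via the elementary equivalence of Lemma~\ref{lemma-relation-RD-RI}: $(R_1,\ldots,R_K,D) \in \mc{RD}_\mathrm{L}^\star$ iff $(R_1,\ldots,R_K, h(\dv X)-D) \in \mc{RI}_\mathrm{L}^\star$. Your direct approach buys transparency---one sees the Berger--Tung scheme and the converse identification explicitly---at the cost of the entropy-finiteness housekeeping you acknowledge; the paper's detour through mutual information buys a clean discretization step at the cost of one extra layer of indirection.
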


\begin{proof}
The proof of Proposition~\ref{proposition-continous-RD1-CEO} is given in Section~\ref{proof-continous-RD1-CEO}.
\end{proof}

One main result in this paper is an explicit characterization of $\mc {RD}^{\star}_{\mathrm{L}}$. To this end, we show that the region $\mc{RD}_\mathrm{L}^\mathrm{I}$ is exhausted by Gaussian test channels. Also, we show that one can optimally set $Q=\emptyset$, i.e., time-sharing is not needed.

\begin{theorem}~\label{th:GaussSumCap}
The rate-distortion region $\mc {RD}_{L}^{\star}$ of the vector Gaussian CEO problem under logarithmic loss is given by the set of all non-negative rate-distortion tuples $(R_1,\ldots,R_K,D)$ that satisfy, for all subsets $\mc S \subseteq \mc K$,
\vspace{-0.3em}
\begin{align*}
D + \sum\nolimits_{k\in \mathcal{S}}R_k  \geq&  \sum\nolimits_{k\in \mathcal{S}}\log\frac{1}{ |\dv I- \dv\Omega_k \dv\Sigma_k|} \nonumber\\
& + \log \big|(\pi e) \big(\dv\Sigma_{\dv x}^{-1} + \sum\nolimits_{k\in\mathcal{S}^{c}}\mathbf{H}_{k}^\dagger  \dv\Omega_k \mathbf{H}_{k} \big)^{-1} \big| ,
\end{align*}
for some matrices $\{\dv\Omega_k\}_{k=1}^K$ such that $\dv 0 \preceq \dv\Omega_k \preceq \dv\Sigma_k^{-1}$. 
\end{theorem}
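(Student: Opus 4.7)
The plan is to apply Proposition~\ref{proposition-continous-RD1-CEO} to reduce the problem to an evaluation of $\mc{RD}_\mathrm{L}^\mathrm{I}$ for the vector Gaussian source, and then to show that Gaussian test channels with $Q=\emptyset$ both achieve and exhaust this region. Achievability will be a direct computation, while the converse will rely on an Ekrem--Ulukus-type~\cite{EU14} de Bruijn / Fisher-information argument; its distinctive feature here is that the logarithmic loss already casts the second summand as a differential entropy, which aligns naturally with a Gaussian optimizer.

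For achievability I will specialize the auxiliaries in~\eqref{joint-measure-auxiliary-random-variables-CEO-problem} to $Q=\emptyset$ and $U_k=\mathbf{Y}_k+\mathbf{Q}_k$ with the $\mathbf{Q}_k\sim\mathcal{CN}(\mathbf{0},\mathbf{\Sigma}_{\mathbf{Q}_k})$ mutually independent and independent of $(\mathbf{X},\mathbf{Y}_1,\ldots,\mathbf{Y}_K)$; the factorization in~\eqref{joint-measure-auxiliary-random-variables-CEO-problem} is then immediate. Identifying $\mathbf{\Omega}_k:=(\mathbf{\Sigma}_k+\mathbf{\Sigma}_{\mathbf{Q}_k})^{-1}$, which sweeps $(\mathbf{0},\mathbf{\Sigma}_k^{-1})$ as $\mathbf{\Sigma}_{\mathbf{Q}_k}$ sweeps the positive-definite cone (endpoints handled by standard limits), a jointly Gaussian MMSE computation gives $I(\mathbf{Y}_k;U_k|\mathbf{X})=\log(1/|\mathbf{I}-\mathbf{\Omega}_k\mathbf{\Sigma}_k|)$ and $h(\mathbf{X}|U_{\mathcal{S}^c})=\log|(\pi e)(\mathbf{\Sigma}_{\mathbf{x}}^{-1}+\sum_{k\in\mathcal{S}^c}\mathbf{H}_k^\dagger\mathbf{\Omega}_k\mathbf{H}_k)^{-1}|$, which sum to the claimed bound.

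For the converse, given any $(U_1,\ldots,U_K,Q)$ satisfying~\eqref{joint-measure-auxiliary-random-variables-CEO-problem}, I will define the MMSE-induced matrices
\begin{equation*}
\mathbf{\Omega}_k:=\mathbf{\Sigma}_k^{-1}-\mathbf{\Sigma}_k^{-1}\,\mathbb{E}\bigl[\mathrm{Cov}(\mathbf{Y}_k|\mathbf{X},U_k,Q)\bigr]\,\mathbf{\Sigma}_k^{-1},
\end{equation*}
so that $\mathbf{0}\preceq\mathbf{\Omega}_k\preceq\mathbf{\Sigma}_k^{-1}$ by the MMSE principle (conditioning does not increase covariance). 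A conditional maximum-entropy bound on $h(\mathbf{N}_k|\mathbf{X},U_k,Q)$ then yields $I(\mathbf{Y}_k;U_k|\mathbf{X},Q)\geq\log(1/|\mathbf{I}-\mathbf{\Omega}_k\mathbf{\Sigma}_k|)$, reproducing the $\mathcal{S}$-indexed terms in the theorem.

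The main obstacle is the complementary lower bound $h(\mathbf{X}|U_{\mathcal{S}^c},Q)\geq\log|(\pi e)(\mathbf{\Sigma}_{\mathbf{x}}^{-1}+\sum_{k\in\mathcal{S}^c}\mathbf{H}_k^\dagger\mathbf{\Omega}_k\mathbf{H}_k)^{-1}|$, for which maximum-entropy points the wrong way. The plan is to perturb the source by an independent $\mathbf{Z}\sim\mathcal{CN}(\mathbf{0},\mathbf{I})$ and apply the matrix de Bruijn identity to express $h(\mathbf{X}+\sqrt{t}\mathbf{Z}|U_{\mathcal{S}^c},Q)$ as an integral in $t$ of the conditional Fisher information $\mathbf{J}(\mathbf{X}+\sqrt{t}\mathbf{Z}|U_{\mathcal{S}^c},Q)$; together with a matrix Fisher-information inequality exploiting the Markov structure $U_k\mkv\mathbf{Y}_k\mkv\mathbf{X}\mkv(\mathbf{Y}_{\mathcal{K}/k},U_{\mathcal{K}/k})$, this will yield an operator upper bound on that conditional Fisher information involving exactly $\mathbf{\Sigma}_{\mathbf{x}}^{-1}+\sum_{k\in\mathcal{S}^c}\mathbf{H}_k^\dagger\mathbf{\Omega}_k\mathbf{H}_k$ with the above MMSE-induced $\mathbf{\Omega}_k$'s. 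Integrating this bound against the Gaussian reference whose Fisher information realizes it, and letting $t\downarrow 0$, delivers the desired lower bound on $h(\mathbf{X}|U_{\mathcal{S}^c},Q)$. This is precisely the step where the analogous outer-bounding approach becomes non-tight under quadratic loss~\cite{EU14}; under logarithmic loss, however, the differential-entropy form of the objective is perfectly matched by the Gaussian maximizer, closing the bound and, in passing, showing that a non-trivial time-sharing variable $Q$ provides no enlargement of the region.
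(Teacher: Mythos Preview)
Your proposal is correct and follows essentially the same route as the paper: Gaussian test channels for achievability, and for the converse an Ekrem--Ulukus-style argument defining $\mathbf{\Omega}_k$ through the conditional MMSE of $\mathbf{Y}_k$, bounding $I(\mathbf{Y}_k;U_k|\mathbf{X},Q)$ via the maximum-entropy inequality, and bounding $h(\mathbf{X}|U_{\mathcal{S}^c},Q)$ below via Fisher information. Two minor differences worth noting: the paper bypasses your de~Bruijn integration by directly invoking the known inequality $h(\mathbf{X}|\mathbf{Y})\geq\log|(\pi e)\,\mathbf{J}^{-1}(\mathbf{X}|\mathbf{Y})|$ (your ``integrate and let $t\downarrow 0$'' description is a bit garbled---the comparison runs over all $t\geq 0$, not $t\downarrow 0$), and it establishes the conditional Fisher information as an \emph{equality} $\mathbf{J}(\mathbf{X}|U_{\mathcal{S}^c,q},q)=\mathbf{\Sigma}_{\mathbf{x}}^{-1}+\sum_{k\in\mathcal{S}^c}\mathbf{H}_k^\dagger\mathbf{\Omega}_{k,q}\mathbf{H}_k$ via the MMSE--Fisher lemma of~\cite{EU14}, working per-$q$ and then averaging by Jensen, whereas you average over $Q$ from the outset---both orderings lead to the same bound.
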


\begin{proof} 
The proof of the direct part of Theorem~\ref{th:GaussSumCap} follows simply by evaluating~\eqref{eq:GaussSumCap_1} using Gaussian test channels and no time-sharing. Specifically, we set $Q= \emptyset$ and $p( u_k|\dv y_k,q) = \mc{CN}(\dv y_k,  \mathbf{\Sigma}_{k}^{1/2}(\dv\Omega_k-\dv I)\mathbf{\Sigma}_{k}^{1/2}) $. The proof of the converse  appears in Section~\ref{secV_subsecA}.
\end{proof}

\vspace{-1em}
\begin{remark}
In~\cite{CW14}, it was shown that the union of all rate-distortion tuples that satisfy~\eqref{eq:GaussSumCap_1} for all subsets $\mc S \subseteq \mc K$ coincides with the Berger-Tung inner bound in which time-sharing is used. The direct part of Theorem~\ref{th:GaussSumCap} is obtained by evaluating~\eqref{eq:GaussSumCap_1} using Gaussian test channels and $Q=\emptyset$, not the Berger-Tung inner bound. The reader may wonder: i) whether Gaussian test channels also exhaust the Berger-Tung inner bound for the vector Gaussian CEO problem that we study here, and ii) whether time-sharing is needed with the Berger-Tung scheme. This is addressed in Section~\ref{secIII_subsecB}, where it will be shown that the answer to both questions is positive. \qedblack
\end{remark}

\vspace{-1em}
\begin{remark}
For the converse proof of Theorem~\ref{th:GaussSumCap}, we derive an outer bound on the region described by~\eqref{eq:GaussSumCap_1}. In doing so, we use the de Bruijn identity, a connection between differential entropy and Fisher information, along with the properties of MMSE and Fisher information. By opposition to the case of quadratic distortion  for which the application of this technique was shown in~\cite{EU14} to result in an outer bound that is generally non-tight, Theorem~\ref{th:GaussSumCap} shows that the approach is successful in the case of logarithmic loss distortion measure, yielding a complete characterization of the region. Theorem~\ref{th:GaussSumCap} is also connected to recent developments on characterizing the capacity of multiple-input multiple-output (MIMO) relay channels in which the relay nodes are connected to the receiver through error-free finite-capacity links (i.e., the so-called cloud radio access networks). The reader may refer to~\cite[Theorem 4]{ZXYC16} where important progress is done, and \cite{EZSC17,EZSC17-2} where compress-and-forward with joint decompression-decoding is shown to be optimal under the constraint of oblivious relay processing. \qedblack
\end{remark}

\vspace{-1.9em}
\subsection{Gaussian Test Channels with Time-Sharing Exhaust the Berger-Tung Region}\label{secIII_subsecB}

In this section, we show that for the vector Gaussian CEO problem under logarithmic loss, the Berger-Tung coding scheme with Gaussian test channels and time-sharing achieves distortion levels that are not larger than any other coding scheme. That is, Gaussian test channels \textit{with} time-sharing exhaust the Berger-Tung region for this model.

\vspace{-0.5em}
\begin{definition}\label{defintion-continous-RD2-CEO}
For given tuple of auxiliary random variables $(V_1,\ldots,V_K,Q')$ with distribution $p(v_1,\ldots,v_K,q')$ such that $p(\dv x,\dv y_1,\ldots,\dv y_K,v_1,\ldots,v_K,q')$ factorizes as
\vspace{-0.5em}
\begin{equation}~\label{equation-joint-measure-auxiliary-random-variables-continous-2-CEO-problem}
p(q') p(\dv x) \prod\nolimits_{k=1}^K p(\dv y_k|\dv x) \prod\nolimits_{k=1}^K p(v_k|\dv y_k,q'),
\end{equation}
define $\mc {RD}^{\mathrm{II}}_{L}(V_1,\ldots,V_K,Q')$ as the set of all non-negative tuples $(R_1,\ldots,R_K,D)$ that satisfy, for all subsets $\mc S \subseteq \mc K$,
\begin{align*} 
\sum\nolimits_{k \in \mathcal{S}} R_k &\geq I(\dv Y_{\mc S};V_{\mc S}|V_{\mc S^{c}},Q') \\
D &\geq h(\dv X| V_1,\ldots,V_K,Q').
\end{align*}
{\fontsize{9.2}{11}\selectfont Also, let $\mc{RD}_\mathrm{L}^\mathrm{II} := \bigcup \mc{RD}_\mathrm{L}^\mathrm{II} (V_1,\ldots,V_K,Q')$ where the union is taken over all tuples $(V_1,\ldots,V_K,Q')$ with distributions that satisfy~\eqref{equation-joint-measure-auxiliary-random-variables-continous-2-CEO-problem}.}\qedblack
\end{definition}
\vspace{-1em}
\begin{proposition}\label{th:alternative}
$\mc {RD}_{L}^{*} = \bigcup\mc {RD}^\mathrm{II}_\mathrm{L}(V_1^{\mathrm{G}},\ldots,V_K^{\mathrm{G}},Q')$, 
where $\mc{RD}_\mathrm{L}^\mathrm{II} (\cdot)$ is as given in Definition~\ref{defintion-continous-RD2-CEO} and the superscript $\mathrm{G}$ is used to denote that the union is taken over Gaussian distributed $V_k^{\mathrm{G}}\sim p(v_k|\dv y_k,q')$ conditionally on $(\dv Y_k, Q')$.
\end{proposition}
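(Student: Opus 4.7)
The plan is to prove the set equality by establishing the two standard inclusions, invoking Theorem~\ref{th:GaussSumCap} in the converse direction together with a local algebraic identity that relates the $\mc{RD}_\mathrm{L}^\mathrm{I}$ and $\mc{RD}_\mathrm{L}^\mathrm{II}$ representations.

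For the achievability direction $\bigcup \mc{RD}_\mathrm{L}^\mathrm{II}(V_1^{\mathrm{G}},\ldots,V_K^{\mathrm{G}},Q') \subseteq \mc{RD}_L^*$, I would invoke the Berger-Tung coding scheme combined with log-loss decoding: for any Gaussian test channels $V_k^\mathrm{G}$ satisfying the factorization~\eqref{equation-joint-measure-auxiliary-random-variables-continous-2-CEO-problem} and any time-sharing variable $Q'$, a quantize-and-bin scheme paired with the posterior decoder $\hat{x}(\cdot) = p(\dv x \mid V_1,\ldots,V_K,Q')$ simultaneously meets $\sum_{k \in \mc S} R_k \geq I(\dv Y_{\mc S}; V_{\mc S} \mid V_{\mc S^c}, Q')$ for every $\mc S \subseteq \mc K$ and achieves the log-loss distortion $D \geq h(\dv X \mid V_1,\ldots,V_K,Q')$, producing a valid rate-distortion code and hence a point in $\mc{RD}_L^*$.

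For the converse direction $\mc{RD}_L^* \subseteq \bigcup \mc{RD}_\mathrm{L}^\mathrm{II}(V^{\mathrm{G}},Q')$, my plan is as follows. Given any $(R_1,\ldots,R_K,D) \in \mc{RD}_L^*$, Theorem~\ref{th:GaussSumCap} supplies matrices $\{\dv\Omega_k\}$ with $\dv 0 \preceq \dv\Omega_k \preceq \dv\Sigma_k^{-1}$ realizing the explicit Gaussian characterization, and the same Gaussian test channels $V_k^\mathrm{G}$ used in the direct part of Theorem~\ref{th:GaussSumCap} place the point in $\mc{RD}_\mathrm{L}^\mathrm{I}(V^\mathrm{G}, \emptyset)$. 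Exploiting the conditional independence of $V_1,\ldots,V_K$ given $\dv X$ (inherited from the Markov chain~\eqref{eq:MKChain_pmf}) together with standard chain-rule manipulations, one verifies the identity
\begin{equation*}
I(\dv Y_{\mc S}; V_{\mc S}\mid V_{\mc S^c}, Q') + h(\dv X \mid V_{\mc K}, Q') = \sum\nolimits_{k\in\mc S} I(\dv Y_k; V_k \mid \dv X, Q') + h(\dv X \mid V_{\mc S^c}, Q'),
\end{equation*}
which equates the $\mc{RD}_\mathrm{L}^\mathrm{I}$ sum constraint with the Berger-Tung sum constraint plus the log-loss distortion floor. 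On the hyperplane $D = h(\dv X \mid V_{\mc K}^\mathrm{G})$ this gives immediate inclusion in $\mc{RD}_\mathrm{L}^\mathrm{II}(V^\mathrm{G}, \emptyset)$; points with larger $D$ and correspondingly smaller rates, which the $\mc{RD}_\mathrm{L}^\mathrm{I}$ polytope permits, are then reached by time-sharing $Q'$ between $V^\mathrm{G}$ and an auxiliary Gaussian test channel (e.g., a degenerate constant) that trades off higher distortion for lower rate.

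The hard part will be the time-sharing step in the converse: showing that a finite-support $Q'$ over Gaussian test channels is enough to sweep out the full polytope $\mc{RD}_\mathrm{L}^\mathrm{I}(V^\mathrm{G}, \emptyset)$ in Berger-Tung form. This essentially adapts the polymatroid/contrapolymatroid correspondence used by Courtade-Weissman for DM sources~\cite{CW14} to the Gaussian setting, with a Carath\'eodory-type argument bounding the cardinality of $Q'$. A secondary technical point is verifying that all auxiliary Gaussian choices entering the time-sharing still respect the Markov structure required by~\eqref{equation-joint-measure-auxiliary-random-variables-continous-2-CEO-problem}, which is straightforward since each branch is constructed as $V_k = \dv A_k \dv Y_k + \dv Z_k$ with $\dv Z_k$ Gaussian and independent of the rest.
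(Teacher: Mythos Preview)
Your approach is essentially the same as the paper's, but packaged differently. Both arguments reduce to: (i) Theorem~\ref{th:GaussSumCap} places any achievable tuple in $\mc{RD}_\mathrm{L}^\mathrm{I}(U_1^\mathrm{G},\ldots,U_K^\mathrm{G},\emptyset)$ for some Gaussian test channels, and (ii) every point of that polytope lies in $\bigcup \mc{RD}_\mathrm{L}^\mathrm{II}(V^\mathrm{G},Q')$. The paper carries out step (ii) by enumerating the extreme points of the polytope (five of them for $K=2$) and showing each is dominated by a point in $\mc{RD}_\mathrm{L}^\mathrm{II}$ obtained from Gaussian test channels \emph{without} time-sharing; convex combinations via $Q'$ then fill in the polytope. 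You instead invoke the chain-rule identity equating the $\mc{RD}_\mathrm{L}^\mathrm{I}$ sum constraint with the Berger--Tung rate constraint plus $h(\dv X|V_{\mc K})$, and defer to the polymatroid/contrapolymatroid correspondence. Your identity is exactly what underlies the paper's extreme-point calculation, so the two arguments are the same at heart.

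One point to sharpen: time-sharing between the full tuple $V^\mathrm{G}$ and a \emph{single} degenerate constant is not enough. For instance (in the paper's notation for $K=2$), the vertex $P_2=(I(\dv Y_1;U_1^\mathrm{G}),\,0,\,I(\dv Y_2;U_2^\mathrm{G}|\dv X)+h(\dv X|U_1^\mathrm{G}))$ has $R_2=0$, which no convex combination of a point in $\mc{RD}_\mathrm{L}^\mathrm{II}(U_1^\mathrm{G},U_2^\mathrm{G})$ (where $R_2\geq I(\dv Y_2;U_2^\mathrm{G}|U_1^\mathrm{G})>0$) and a point in $\mc{RD}_\mathrm{L}^\mathrm{II}(\emptyset,\emptyset)$ can produce. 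The paper handles this by allowing \emph{subsets} of agents to be turned off---e.g., $P_2$ is dominated by a point in $\mc{RD}_\mathrm{L}^\mathrm{II}(U_1^\mathrm{G},\emptyset)$. Your ``e.g.'' and the reference to Courtade--Weissman's polymatroid argument suggest you anticipate this, but when you fill in the details you will need to time-share over the $2^K$ configurations in which each $V_k$ is either $U_k^\mathrm{G}$ or a constant, exactly as the paper's extreme-point list makes explicit.
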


\vspace{-1em}
\begin{proof} 
For the proof of Proposition~\ref{th:alternative}, it is sufficient to show that, for fixed Gaussian conditional distributions $\{p(u_k|\dv y_k)\}_{k=1}^K$,  the extreme points of the polytopes defined by~\eqref{eq:GaussSumCap_1} are \textit{dominated} by points that are in $\mc {RD}_{\mathrm{L}}^{\mathrm{II}}$ and which are achievable using Gaussian conditional distributions $\{p(v_k|\dv y_k, q')\}_{k=1}^K$. Hereafter, we give a brief outline of proof for the case $K=2$. The reasoning for $K \geq 2$ is similar and is provided in the extended version~\cite{PaperFull}. Consider the inequalities~\eqref{eq:GaussSumCap_1} with $Q=\emptyset$ and $(U_1,U_2) := (U^\mathrm{G}_1,U^\mathrm{G}_2)$ chosen to be Gaussian (see Theorem~\ref{th:GaussSumCap}). Consider now the extreme points of the polytopes defined by the obtained inequalities:
\vspace{-0.3em}
\begin{align*}
P_1 &= (0,0,I(\dv Y_1;U^\mathrm{G}_1|\dv X) + I(\dv Y_2;U^\mathrm{G}_2|\dv X)+ h(\dv X))\\
P_2 &= (I(\dv Y_1;U^\mathrm{G}_1),0,I(U^\mathrm{G}_2;\dv Y_2|\dv X) + h(\dv X|U^\mathrm{G}_1))\\
P_3 &= (0,I(\dv Y_2;U^\mathrm{G}_2),I(U^\mathrm{G}_1;\dv Y_1|\dv X) + h(\dv X|U^\mathrm{G}_2))\\
P_4 &= (I(\dv Y_1;U^\mathrm{G}_1),I(\dv Y_2;U^\mathrm{G}_2|U^\mathrm{G}_1), h(\dv X|U^\mathrm{G}_1,U^\mathrm{G}_2))\\
P_5 &= (I(\dv Y_1;U^\mathrm{G}_1|U^\mathrm{G}_2), I(\dv Y_2;U^\mathrm{G}_2), h(\dv X|U^\mathrm{G}_1,U^\mathrm{G}_2)),
\end{align*} \\[-1.3em]
where the point $P_j$ is a a triple $(R_1^{(j)}, R_2^{(j)}, D^{(j)})$. It is easy to see that each of these points is \textit{dominated} by a point in $\mc{RD}_{\mathrm{L}}^{\mathrm{II}}$, i.e., there exists $(R_1,R_2,D) \in \mc{RD}_{\mathrm{L}}^{\mathrm{II}}$ for which $R_1 \leq R_1^{(j)}$, $R_2\leq R_2^{(j)}$ and $D \leq D^{(j)}$. To see this, first note that $P_4$ and $P_5$ are both in $\mc{RD}_{\mathrm{L}}^{\mathrm{II}}$. Next, observe that the point $(0,0,h(\dv X))$ is in $\mc{RD}_{\mathrm{L}}^{\mathrm{II}}$, which is clearly achievable by letting $(V_1,V_2,Q') = (\emptyset,\emptyset,\emptyset)$, dominates $P_1$. Also, by using letting $(V_1,V_2,Q') = (U^\mathrm{G}_1,\emptyset,\emptyset)$,  we have that the point $(I(\dv Y_1;U_1),0, h(\dv X|U_1))$ is in $\mc{RD}_{\mathrm{L}}^{\mathrm{II}}$, and dominates the point $P_2$. A similar argument shows that $P_3$ is dominated by a point in $\mc{RD}_{\mathrm{L}}^{\mathrm{II}}$. The proof is terminated by observing that, for all above corner points, $V_k$ is set either equal $U^\mathrm{G}_k$ (which is Gaussian distributed conditionally on $\dv Y_k$) or a constant. 
\end{proof}

\vspace{-1.1em}
\begin{remark}
By opposition to the region $\mc {RD}^{\mathrm{I}}_{\mathrm{L}}$ described by the inequalities~\eqref{eq:GaussSumCap_1} for which we have shown that the time-sharing variable can be optimally set to $Q=\emptyset$, time-sharing may still be needed to exhaust the entire region $\mc {RD}^{\mathrm{II}}_{\mathrm{L}}$. On this aspect, we note that, from the proof of Proposition~\ref{th:alternative}, it is only implied that the corner points of this region are achieved with Gaussian test channels without time-sharing. To get the entire region, one needs to time-share Gaussian test channels. \qedblack
\end{remark}

\vspace{-1.7em}
\section{Quadratic Vector Gaussian CEO Problem with Determinant Constraint}\label{secIV}

{\fontsize{9.3}{11}\selectfont We turn to the case in which the distortion is measured under quadratic loss. In this case, the mean square error matrix is given by}
\vspace{-0.15em}
\begin{equation}\label{equation-mean-square-error}
\dv D^{(n)} := \frac{1}{n} \sum\nolimits_{i=1}^{n} \mathbb{E} \left[ (\dv X_i -  \hat{\dv X}_i) (\dv X_i -  \hat{\dv X}_i)^\dagger \right].
\end{equation}
Under a (general) error constraint of the form 
\vspace{-0.3em}
\begin{equation}
\dv D^{(n)} \preceq  \dv D,
\label{vector-Gaussian-ceo-quadratic-measure-matrix-constraint}
\end{equation}\\[-1.3em]
where $\dv D$ designates here a prescribed positive definite error matrix, a complete solution is still to be found in general. In what follows, we replace the constraint~\eqref{vector-Gaussian-ceo-quadratic-measure-matrix-constraint} with one on the \textit{determinant} of the error matrix $\dv D^{(n)}$, i.e.,
\vspace{-0.4em}
\begin{equation}~\label{vector-Gaussian-ceo-quadratic-measure-det-constraint}
|\dv D^{(n)}| \leq D,
\end{equation}\\[-1.3em]
($D$ is a scalar here). We note that since the error matrix $\dv D^{(n)}$ is minimized by choosing the decoding as  
\vspace{-0.1em}
\begin{equation*}~\label{equation-decoder}
\hat{\dv X}_i = \mathbb{E} [\dv X_i| \breve{\phi}^{(n)}_1  (\dv Y_1^n), \ldots, \breve{\phi}^{(n)}_K (\dv Y_K^n)],
\end{equation*}
where $\{\breve{\phi}^{(n)}_k\}_{k=1}^K$ denote the encoding functions, without loss of generality we can write~\eqref{equation-mean-square-error} as 
\vspace{-0.15em}
\begin{equation*}
\dv D^{(n)} = \frac{1}{n} \sum\nolimits_{i=1}^{n} \mathrm{mmse} (\dv X_i| \breve{\phi}^{(n)}_1  (\dv Y_1^n), \ldots, \breve{\phi}^{(n)}_K (\dv Y_K^n)). 
\end{equation*}

\begin{definition}
A rate-distortion tuple $(R_1,\ldots,R_K,D)$ is achievable for the quadratic vector Gaussian CEO problem with determinant constraint if there exist a blocklength $n$, $K$ encoding functions $\{\breve{\phi}^{(n)}_k\}^K_{k=1}$ such that
\vspace{-0.3em}
\begin{align*}
R_k &\geq \frac{1}{n}\log M^{(n)}_k, \quad \text{for } k=1,\ldots,K, \\
D &\geq 
\left|\frac{1}{n} \sum\nolimits_{i=1}^{n} \mathrm{mmse} (\dv X_i| \breve{\phi}^{(n)}_1  (\dv Y_1^n), \ldots, \breve{\phi}^{(n)}_K (\dv Y_K^n))  \right|. 
\label{equation-det-distortion}
\end{align*} 
\noindent The rate-distortion region $\mc{RD}^{\star}_\mathrm{Q}$ is defined as the union of all non-negative tuples $(R_1,\ldots,R_K,D)$ that are achievable. \qedblack
\end{definition}

\vspace{-0.3em}
\noindent The following lemma essentially states that Theorem~\ref{th:GaussSumCap} provides an outer bound on $\mc{RD}^\star_\mathrm{Q}$.

\vspace{-0.45em}
\begin{lemma}~\label{lemma-connection}
If $(R_1,\ldots,R_K,D) \in \mc{RD}^\star_\mathrm{Q}$, then  $(R_1,\ldots,R_K, \log ({{\pi}e})^{n_x} D) \in \mc{RD}^\mathrm{I}_\mathrm{L}$.  
\end{lemma}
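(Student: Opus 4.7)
The plan is to show that any code achieving a given determinant-constrained quadratic distortion $D$ can be reinterpreted, through a judicious choice of soft reconstruction at the decoder, as a code achieving logarithmic loss at most $\log((\pi e)^{n_x} D)$. The key idea is that a Gaussian soft estimate translates the second-order MMSE error into a log-determinant expression, which is exactly the form appearing on the right-hand side of the claim.

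Concretely, I would start from an achievable tuple $(R_1,\ldots,R_K,D)\in\mc{RD}^\star_\mathrm{Q}$ with encoders $\{\breve\phi^{(n)}_k\}_{k=1}^K$, and denote the messages by $J_k=\breve\phi^{(n)}_k(\dv Y^n_k)$. By hypothesis, $|\dv D^{(n)}|\le D$, where $\dv D^{(n)}=\frac{1}{n}\sum_{i=1}^n\mathrm{mmse}(\dv X_i\mid J_1,\ldots,J_K)$. I would reuse the \emph{same} encoders $\{\breve\phi^{(n)}_k\}$ for the logarithmic-loss problem, so the rate constraints automatically carry over. For the decoder, the natural choice is the product Gaussian soft estimate
\begin{equation*}
\psi^{(n)}(J_1,\ldots,J_K)(\dv x^n)=\prod_{i=1}^n\mc{CN}\!\left(\dv x_i;\,\hat{\dv X}_i,\,\dv D^{(n)}\right),
\end{equation*}
where $\hat{\dv X}_i=\mathbb{E}[\dv X_i\mid J_1,\ldots,J_K]$ is the conditional mean.

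With this choice, expanding $-\log \mc{CN}(\dv x_i;\hat{\dv X}_i,\dv D^{(n)})$ and averaging, the expected log-loss becomes
\begin{equation*}
\mathbb{E}\!\left[d^{(n)}(\dv X^n,\hat{\dv X}^n)\right]=\log\!\big((\pi)^{n_x}|\dv D^{(n)}|\big)+\frac{1}{n}\sum_{i=1}^n\mathrm{tr}\!\left((\dv D^{(n)})^{-1}\,\mathrm{mmse}(\dv X_i\mid J_1,\ldots,J_K)\right).
\end{equation*}
The trace term collapses, by linearity and the definition of $\dv D^{(n)}$, to $\mathrm{tr}\!\left((\dv D^{(n)})^{-1}\dv D^{(n)}\right)=n_x$, so that the total distortion equals $\log\!\big((\pi e)^{n_x}|\dv D^{(n)}|\big)\le\log\!\big((\pi e)^{n_x}D\big)$, where the last inequality uses the determinant constraint and monotonicity of $\log$.

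This shows that $(R_1,\ldots,R_K,\log((\pi e)^{n_x}D))$ is achievable for the logarithmic-loss CEO problem, hence lies in $\mc{RD}^\star_\mathrm{L}$; invoking Proposition~\ref{proposition-continous-RD1-CEO} gives membership in $\mc{RD}^\mathrm{I}_\mathrm{L}$ and completes the proof. There is no real obstacle here beyond selecting the correct soft reconstruction; the only subtlety worth emphasizing is that one must use the \emph{sample-averaged} error covariance $\dv D^{(n)}$ as the (common) covariance of the Gaussian soft estimate, since this is precisely what makes the trace term symbol-independent and lets the determinant constraint appear cleanly. An alternative is to use the per-symbol MMSE matrix $\mathrm{mmse}(\dv X_i\mid J_1,\ldots,J_K)$ and then invoke Jensen's inequality applied to $\log|\cdot|$ (concave on positive-definite matrices); this would even yield a slightly tighter bound but is not needed for the stated lemma.
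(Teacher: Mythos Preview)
Your proof is correct and takes a genuinely different, more elementary route than the paper's. The paper works directly with the single-letter description $\mc{RD}^{\mathrm{I}}_{\mathrm{L}}$: it defines $\bar\Delta^{(n)}=\tfrac{1}{n}h(\dv X^n\mid J_{\mc K})$, invokes the converse steps of~\cite[Theorem~10]{CW14} to produce auxiliaries $(U_1,\ldots,U_K,Q)$ satisfying~\eqref{eq:GaussSumCap_1} with distortion $\bar\Delta^{(n)}$, and then upper-bounds $\bar\Delta^{(n)}$ by $\log((\pi e)^{n_x}D)$ via the chain rule, ``conditioning reduces entropy'', the maximal-entropy lemma, and Jensen for $\log|\cdot|$. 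You instead stay at the operational level: keep the encoders, build an explicit product-Gaussian soft decoder with the sample-averaged MMSE covariance, and compute the log-loss in closed form. What you gain is simplicity---you avoid re-running the single-letterization converse and the four-step entropy bound, and the role of the determinant constraint is transparent (it enters only through $\log|\dv D^{(n)}|$). What the paper's route buys is that it lands directly in $\mc{RD}^{\mathrm{I}}_{\mathrm{L}}$ without passing through $\mc{RD}^{\star}_{\mathrm{L}}$ and Proposition~\ref{proposition-continous-RD1-CEO}; your argument relies on that proposition (equivalently, on the full Courtade--Weissman characterization), whereas the paper's proof only needs the converse half of it. A minor technical point: your Gaussian density requires $\dv D^{(n)}\succ\dv 0$; if it is singular, regularize to $\dv D^{(n)}+\epsilon\dv I$ and note the trace term is then at most $n_x$, which still yields the desired bound for any $D>0$.
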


\vspace{-1em}
\begin{proof}
The proof of Lemma~\ref{lemma-connection} is given in Section~\ref{secV_subsecB}.
\end{proof}

\vspace{-0.3em}
\noindent We are now ready to state the main result of this section, which is a complete characterization of the region $\mc{RD}^{\star}_\mathrm{Q}$.

\vspace{-0.4em}
\begin{theorem}~\label{theorem-rate-distortion-region-ceo-det-quadratic-constraint}
The rate-distortion region $\mc{RD}^\star_\mathrm{Q}$ of the quadratic vector Gaussian CEO problem with determinant constraint is given by the set of  all non-negative rate-distortion tuples $(R_1,\ldots,R_K,D)$ that satisfy, for all subsets $\mc S \subseteq \mc K$,
\vspace{-0.3em}
\begin{align*}
\log \frac{1}{D} \leq \!
\sum_{k\in \mathcal{S}} \! R_k+\log|\dv I-\dv\Omega_k \dv\Sigma_k|  
 + \log  \big|\dv\Sigma_{\dv x}^{-1} +  \sum_{k\in\mathcal{S}^{c}} \dv H_k^\dagger \dv\Omega_k \dv H_{k}\big|
\end{align*}\\[-0.5em]
for some $\dv 0 \preceq \dv\Omega_k \preceq \dv\Sigma_k^{-1}$, $k=1,\ldots,K$. 
\end{theorem}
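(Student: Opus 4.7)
The plan is to lift the log-loss characterization in Theorem~\ref{th:GaussSumCap} to the determinant-constrained quadratic setting: the converse follows immediately from Lemma~\ref{lemma-connection}, while the achievability uses Berger--Tung compression with Gaussian test channels and conditional-mean reconstruction at the decoder.

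\textbf{Converse.} Take any $(R_1,\ldots,R_K,D) \in \mc{RD}^\star_{\mathrm{Q}}$. By Lemma~\ref{lemma-connection}, the tuple $(R_1,\ldots,R_K,\log((\pi e)^{n_x}D))$ lies in $\mc{RD}^{\mathrm{I}}_{\mathrm{L}}$, which by Proposition~\ref{proposition-continous-RD1-CEO} coincides with $\mc{RD}^\star_{\mathrm{L}}$. Theorem~\ref{th:GaussSumCap} then supplies matrices $\{\dv\Omega_k\}$ with $\dv 0 \preceq \dv\Omega_k \preceq \dv\Sigma_k^{-1}$ such that for every $\mc S \subseteq \mc K$,
\begin{align*}
\log((\pi e)^{n_x}D) + \sum_{k \in \mc S} R_k
&\geq \sum_{k \in \mc S} \log\frac{1}{|\dv I - \dv\Omega_k \dv\Sigma_k|} \\
&\quad + \log\Bigl|(\pi e)\Bigl(\dv\Sigma_{\dv x}^{-1} + \sum_{k \in \mc S^c} \dv H_k^\dagger \dv\Omega_k \dv H_k\Bigr)^{-1}\Bigr|.
\end{align*}
Since the matrix inside the last determinant is $n_x \times n_x$, the two $(\pi e)^{n_x}$ factors cancel on either side; rearranging the remaining terms delivers exactly the inequality claimed in the theorem.

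\textbf{Achievability.} Fix $\{\dv\Omega_k\}$ with $\dv 0 \preceq \dv\Omega_k \preceq \dv\Sigma_k^{-1}$ and introduce the Gaussian test channels
\[
U_k = \dv Y_k + \dv Z_k, \qquad \dv Z_k \sim \mc{CN}(\dv 0, \dv\Omega_k^{-1} - \dv\Sigma_k),
\]
mutually independent and independent of $(\dv X, \dv Y_1, \ldots, \dv Y_K)$; the prescribed range of $\dv\Omega_k$ guarantees $\dv\Omega_k^{-1} - \dv\Sigma_k \succeq \dv 0$. A standard Gaussian MMSE calculation then gives
\[
\mathrm{mmse}(\dv X | U_1,\ldots,U_K) = \Bigl(\dv\Sigma_{\dv x}^{-1} + \sum\nolimits_{k=1}^K \dv H_k^\dagger \dv\Omega_k \dv H_k\Bigr)^{-1} =: \dv M.
\]
Apply Berger--Tung compression with these test channels, and let the decoder form $\hat{\dv X}_i = \mathbb{E}[\dv X_i | \hat{U}_{1,i},\ldots,\hat{U}_{K,i}]$ from the recovered codewords. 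A standard joint-typicality argument yields that the per-letter MSE matrix converges to $\dv M$, so that $|\dv D^{(n)}| \to |\dv M|$. The Berger--Tung rate inequalities evaluated at these Gaussian auxiliaries are precisely~\eqref{eq:GaussSumCap_1} with $Q = \emptyset$, and the same algebraic simplification already performed in the direct part of Theorem~\ref{th:GaussSumCap}---now with $D = |\dv M|$ substituted for $D_{\mathrm{L}} = \log((\pi e)^{n_x}|\dv M|)$ so that the $(\pi e)^{n_x}$ factors again cancel---produces exactly the inequalities in the theorem statement. Taking the union over admissible $\{\dv\Omega_k\}$ and invoking the trivial monotonicity of $\mc{RD}^\star_{\mathrm{Q}}$ in $(R_1,\ldots,R_K,D)$ covers every tuple satisfying the stated inequality.

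The main obstacle is the passage from the soft, log-loss-optimal decoder used in Theorem~\ref{th:GaussSumCap} to the hard MMSE decoder required here, and in particular certifying that the per-letter MSE \emph{matrix} (and not merely its log-determinant) converges to the single-letter Gaussian MMSE $\dv M$. Once this convergence is justified by a standard joint-typicality argument, the remainder of the proof reduces to the same algebraic manipulation that was already carried out for the log-loss region.
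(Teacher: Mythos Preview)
Your converse is correct and is exactly the route the paper takes: Lemma~\ref{lemma-connection} pushes the tuple into $\mc{RD}^{\mathrm{I}}_{\mathrm{L}}=\mc{RD}^\star_{\mathrm{L}}$, and Theorem~\ref{th:GaussSumCap} then produces the matrices $\{\dv\Omega_k\}$.

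Your achievability, however, has a genuine gap. The sentence ``the Berger--Tung rate inequalities evaluated at these Gaussian auxiliaries are precisely~\eqref{eq:GaussSumCap_1}'' is not correct as a statement about polytopes: for a \emph{fixed} choice of $\{U_k\}$, the Berger--Tung constraints $\sum_{k\in\mc S}R_k\ge I(\dv Y_{\mc S};U_{\mc S}|U_{\mc S^c})$, $D_L\ge h(\dv X|U_{\mc K})$ coincide with~\eqref{eq:GaussSumCap_1} only on the face $D_L=h(\dv X|U_{\mc K})$ (equivalently $D=|\dv M|$). For $D>|\dv M|$ the $\mc{RD}^{\mathrm{I}}_{\mathrm{L}}$-polytope~\eqref{eq:GaussSumCap_1} admits strictly smaller rate tuples than Berger--Tung with the same auxiliaries; this is exactly the content of the extreme-point analysis in the proof of Proposition~\ref{th:alternative}, where corners $P_1,P_2,P_3$ are only \emph{dominated} by Berger--Tung points obtained with \emph{different} auxiliaries (some $V_k=\emptyset$). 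Hence your union over $\{\dv\Omega_k\}$ together with monotonicity yields only $\bigcup_{\dv\Omega}\mc{RD}^{\mathrm{II}}_{\mathrm{L}}(\dv\Omega,\,Q'=\emptyset)$, and Remark~3 in the paper warns that this is in general a strict subset of $\mc{RD}^\star_{\mathrm{L}}$: time-sharing may be required.

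The paper closes precisely this gap by routing the achievability through Proposition~\ref{th:alternative}. It first identifies $\mc{RD}^\star_{\mathrm{L}}$ with $\bigcup\mc{RD}^{\mathrm{II}}_{\mathrm{L}}(V^{\mathrm{G}}_1,\ldots,V^{\mathrm{G}}_K,Q')$ (Gaussian test channels \emph{with} time-sharing), so that any tuple in the region of Theorem~\ref{theorem-rate-distortion-region-ceo-det-quadratic-constraint} already satisfies the Berger--Tung rate constraints for some Gaussian $(V^{\mathrm{G}}_k,Q')$, and then uses that for those auxiliaries the achieved error matrix under MMSE decoding has $\log((\pi e)^{n_x}|\mathrm{mmse}(\dv X|V^{\mathrm{G}}_{\mc K},Q')|)=h(\dv X|V^{\mathrm{G}}_{\mc K},Q')$. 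Your argument skips the $\mc{RD}^{\mathrm{I}}_{\mathrm{L}}\!\to\!\mc{RD}^{\mathrm{II}}_{\mathrm{L}}$ translation provided by Proposition~\ref{th:alternative}, and without it the achievability of the full region is not established.
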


\begin{proof} 
The proof of Theorem~\ref{theorem-rate-distortion-region-ceo-det-quadratic-constraint} is given in Section~\ref{secV_subsecC}.	
\end{proof}

\vspace{-1em}
\begin{remark}
It is believed that the approach of this section, which connects the quadratic vector Gaussian CEO problem to that under logarithmic loss, can also be exploited to possibly infer other new results on the  quadratic vector Gaussian CEO problem. Alternatively, it can also be used to derive new converses on the quadratic vector Gaussian CEO problem. For example, in the case of scalar sources, Theorem~\ref{theorem-rate-distortion-region-ceo-det-quadratic-constraint}, and Lemma~\ref{lemma-connection}, readily provide an alternate converse proof to those of~\cite{O05, PTR04} for this model. \qedblack 
\end{remark}

\vspace{-0.8em}
\section{Proofs}\label{secV}

\vspace{-0.2em}
\subsection{Proof of Proposition~\ref{proposition-continous-RD1-CEO}}\label{proof-continous-RD1-CEO}

First let us define the rate-information region $\mc{RI}_\mathrm{L}^\star$ for discrete memoryless sources as the closure of all rate-information tuples $(R_1,\ldots,R_K,\Delta)$ for which there exist a blocklength $n$, encoding functions $\{\phi^{(n)}_k\}^K_{k=1}$ and a decoding function $\psi^{(n)}$ such that  
\begin{align*}
R_k &\geq \frac{1}{n}\log M^{(n)}_k, \quad \text{for } k=1,\ldots,K, \\
\Delta &\leq \frac{1}{n} I(\dv X^n;\psi^{(n)}(\phi_1^{(n)}(\dv Y_1^n),\ldots,\phi_K^{(n)}(\dv Y_K^n))). 
\end{align*} 
It is easy to see that a characterization of $\mc{RI}_\mathrm{L}^\star$ can be obtained by using~\cite[Theorem 10]{CW14} and substituting distortion levels $D$ therein with $(H(\dv X) - D)$. More specifically, the region $\mc{RI}_\mathrm{L}^\star$ is given as in the following proposition. 

\vspace{-0.5em}
\begin{proposition}~\label{proposition-DM-RI-CEO}
The rate-information region $\mc{RI}_\mathrm{L}^\star$ of the vector DM CEO problem under logarithmic loss is given by the set of all non-negative tuples $(R_1,\ldots, R_K,D)$ that satisfy, for all subsets $\mc S \subseteq \mc K$,
\begin{align*}
\sum\nolimits_{k \in \mc S} R_k \geq \sum\nolimits_{k \in \mc S} I(\dv Y_k;U_k|\dv X,Q) - I(\dv X;U_{\mc S^c},Q) + \Delta,
\end{align*}
for some joint measure of the form $p(q) p(\dv x) \prod\nolimits_{k=1}^K p(\dv y_k|\dv x) \prod\nolimits_{k=1}^K p(u_k|\dv y_k,q)$. \qedblack
\end{proposition}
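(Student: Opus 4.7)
The plan is to leverage the characterization of the discrete memoryless rate-distortion region under log-loss due to Courtade--Weissman~\cite[Theorem 10]{CW14} and transport it into the rate-information formulation via the standard duality between expected log-loss distortion and conditional entropy.

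First I would establish this equivalence at the operational level. For any blocklength-$n$ code with encoders $\{\phi_k^{(n)}\}_{k=1}^K$ and any soft decoder $\psi^{(n)}$ producing $\hat{\dv X}^n(\cdot)$, set $J_k := \phi_k^{(n)}(\dv Y_k^n)$. Gibbs' inequality applied letter by letter gives
\begin{equation*}
\E[d^{(n)}(\dv X^n,\hat{\dv X}^n)] \;\geq\; \tfrac{1}{n} H(\dv X^n\mid J_1,\ldots,J_K),
\end{equation*}
with equality achieved by choosing $\hat{\dv X}^n(\cdot) = \Pr[\dv X^n=\cdot\mid J_1,\ldots,J_K]$, which lies in the reconstruction alphabet since the latter is the full simplex over $\mc X^n$. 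Because $\dv X^n$ is i.i.d., $\tfrac{1}{n}H(\dv X^n) = H(\dv X)$, so the minimum log-loss distortion achievable with a given pair of encoders is exactly $H(\dv X) - \tfrac{1}{n} I(\dv X^n;J_1,\ldots,J_K)$. Equivalently, $(R_1,\ldots,R_K,\Delta)\in\mc{RI}_\mathrm{L}^\star$ if and only if $(R_1,\ldots,R_K,H(\dv X)-\Delta)\in\mc{RD}_\mathrm{L}^\star$, simultaneously on both the achievability and converse sides.

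Next I would invoke~\cite[Theorem 10]{CW14}, whose hypothesis -- the Markov chain~\eqref{eq:MKChain_pmf} -- is already part of the setup. It states that $\mc{RD}_\mathrm{L}^\star$ is the set of tuples satisfying, for every $\mc S\subseteq\mc K$,
\begin{equation*}
\sum\nolimits_{k\in\mc S} R_k + D \;\geq\; \sum\nolimits_{k\in\mc S} I(\dv Y_k;U_k\mid \dv X,Q) + H(\dv X\mid U_{\mc S^c},Q),
\end{equation*}
for some auxiliary $(U_1,\ldots,U_K,Q)$ with joint measure of the product form stated in~\eqref{joint-measure-auxiliary-random-variables-CEO-problem}. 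Substituting $D = H(\dv X) - \Delta$ and using $H(\dv X\mid U_{\mc S^c},Q) - H(\dv X) = -I(\dv X;U_{\mc S^c},Q)$ -- valid because the factorization~\eqref{joint-measure-auxiliary-random-variables-CEO-problem} makes $Q$ independent of $\dv X$ -- yields the claimed inequality.

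I do not anticipate a substantive obstacle; this is essentially a one-line algebraic translation once the operational equivalence of Step~1 is in hand. The only points that merit a brief written check are that the posterior-decoder argument remains available in the vector/general alphabet setting (it does, because the reconstruction space is the simplex of distributions on $\mc X^n$), that the Markov chain hypothesis of~\cite[Theorem 10]{CW14} is preserved (it is, since the source, observations, and encoding functions are untouched), and that the equivalence transfers cleanly to both the direct and converse directions of the CW14 characterization, which it does by construction.
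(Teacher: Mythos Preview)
Your proposal is correct and follows essentially the same route as the paper: the paper simply states that the characterization of $\mc{RI}_\mathrm{L}^\star$ is obtained from~\cite[Theorem 10]{CW14} by substituting $D$ with $H(\dv X)-\Delta$, and your argument spells out precisely that substitution together with the standard Gibbs/posterior-decoder equivalence between expected log-loss and conditional entropy. The only additional observation you make explicit---that the factorization~\eqref{joint-measure-auxiliary-random-variables-CEO-problem} renders $Q$ independent of $\dv X$, so that $H(\dv X\mid U_{\mc S^c},Q)-H(\dv X)=-I(\dv X;U_{\mc S^c},Q)$---is exactly the algebra the paper leaves implicit.
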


The region $\mc{RI}_\mathrm{L}^\star$ involves mutual information terms only (not entropies); and, so, using a standard discretization argument, it can be easily shown that a characterization of this region in the case of continuous alphabets is also given by Proposition~\ref{proposition-DM-RI-CEO}.

Let us now return to the vector Gaussian CEO problem under logarithmic loss that we study in this paper. First, we state the following lemma, whose proof is easy and is omitted for brevity.

\vspace{-0.5em}
\begin{lemma}~\label{lemma-relation-RD-RI}
$(R_1,\ldots,R_K,D) \in \mc{RD}^{\star}_\mathrm{L}$ if and only if $(R_1,\ldots,R_K,h(\dv X)-D) \in \mc{RI}^{\star}_\mathrm{L}$. \qedblack
\end{lemma}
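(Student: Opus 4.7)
The plan is to exploit the well-known minimum log-loss identity: for any random object $J$ measurable on the encoded messages $(J_1,\ldots,J_K)$, the infimum of $\mathbb{E}[d(\dv X^n,\hat{\dv X}^n)]$ over soft reconstructions $\hat{\dv X}^n$ computed from $J$ equals $\frac{1}{n}H(\dv X^n|J)$, with the infimum attained by the posterior $\hat{x}^n(\cdot) = \Pr\{\dv X^n = \cdot \mid J\}$. Combined with the i.i.d.\ identity $h(\dv X^n) = n\, h(\dv X)$, this rewrites as $\frac{1}{n}H(\dv X^n|J) = h(\dv X) - \frac{1}{n}I(\dv X^n;J)$, so that expected log-loss distortion and normalized mutual information are linked by the constant offset $h(\dv X)$. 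Everything then reduces to reading this identity in the two directions.

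For the forward direction I would start from a blocklength-$n$ code achieving $(R_1,\ldots,R_K,D) \in \mc{RD}^\star_\mathrm{L}$, set $J_k = \phi_k^{(n)}(\dv Y_k^n)$, and apply the above identity to lower-bound the achieved distortion by $\frac{1}{n}H(\dv X^n|J_1,\ldots,J_K)$. This yields $D \geq h(\dv X) - \frac{1}{n}I(\dv X^n;J_1,\ldots,J_K)$. Re-using the same encoders and choosing as rate-information decoder any function whose value is in one-to-one correspondence with $(J_1,\ldots,J_K)$ then produces $h(\dv X) - D \leq \frac{1}{n}I(\dv X^n;J_1,\ldots,J_K)$, certifying $(R_1,\ldots,R_K,h(\dv X)-D) \in \mc{RI}^\star_\mathrm{L}$.

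Conversely, starting from a code achieving $(R_1,\ldots,R_K,h(\dv X)-D) \in \mc{RI}^\star_\mathrm{L}$, the data processing inequality upgrades the decoder-side constraint to $\frac{1}{n}I(\dv X^n;J_1,\ldots,J_K) \geq h(\dv X) - D$, equivalently $\frac{1}{n}H(\dv X^n|J_1,\ldots,J_K) \leq D$. Keeping the same encoders and replacing the decoder in the rate-distortion problem by the posterior $\hat{x}^n(\cdot) = \Pr\{\dv X^n = \cdot \mid J_1,\ldots,J_K\}$ then yields expected log-loss exactly equal to $\frac{1}{n}H(\dv X^n|J_1,\ldots,J_K) \leq D$, establishing $(R_1,\ldots,R_K,D) \in \mc{RD}^\star_\mathrm{L}$.

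The whole argument is an accounting exercise built on the log-loss/entropy identity and a single invocation of the data processing inequality; there is no genuinely hard step. The only point meriting care in the continuous vector Gaussian setting is to verify that the posterior $\Pr\{\dv X^n = \cdot \mid J_1,\ldots,J_K\}$ exists as a regular conditional probability and that the resulting expected log-loss is finite, so that the identity $\mathbb{E}[d(\dv X^n,\hat{\dv X}^n)] = \frac{1}{n}H(\dv X^n|J_1,\ldots,J_K)$ is well-defined; both are immediate under the jointly Gaussian model with finite-alphabet messages $J_k$.
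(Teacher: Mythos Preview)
Your proposal is correct and follows essentially the same route as the paper's own (omitted) argument: both directions hinge on the identity $\inf_{\hat{\dv X}^n}\mathbb{E}[d(\dv X^n,\hat{\dv X}^n)]=\tfrac{1}{n}h(\dv X^n\mid J_{\mc K})$ combined with $h(\dv X^n)=n\,h(\dv X)$, with the posterior decoder attaining the infimum in the converse direction. Your explicit use of the data processing inequality to pass from $I(\dv X^n;\psi^{(n)}(J_{\mc K}))$ to $I(\dv X^n;J_{\mc K})$ is in fact a touch more careful than the paper's sketch, but the argument is otherwise identical.
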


\vspace{-0.5em}

\noindent For vector Gaussian sources, the region $\mc{RD}^{\star}_\mathrm{L}$ can be characterized using Proposition~\ref{proposition-DM-RI-CEO} and Lemma~\ref{lemma-relation-RD-RI}. This completes the proof.

\subsection{Proof of Converse of Theorem~\ref{th:GaussSumCap}}\label{secV_subsecA}

The proof of Theorem~\ref{th:GaussSumCap} relies on deriving an outer bound on the region $\mc{RD}_\mathrm{L}^\mathrm{I}$ given by Proposition~\ref{proposition-continous-RD1-CEO}. In doing so, we use the technique of~\cite[Theorem 8]{EU14} which relies on the de Bruijn identity and the properties of Fisher information and MMSE. 

\vspace{-0.3em}
\begin{lemma}{\cite{DCT91,EU14}}\label{lem:FI_Ineq}
Let $(\mathbf{X,Y})$  be a pair of random vectors with pmf $p(\mathbf{x},\mathbf{y})$. We have
\begin{align}
\log|(\pi e) \mathbf{J}^{-1}(\mathbf{X}|\mathbf{Y})|\leq h(\mathbf{X}|\mathbf{Y})\leq\log|(\pi e) \mathrm{mmse}(\mathbf{X}|\mathbf{Y})|,\nonumber
\end{align}
where the conditional Fisher information matrix is defined as
\begin{equation*}
\mathbf{J}(\mathbf{X}|\mathbf{Y}) := \mathrm{E}[\nabla \log p(\mathbf{X}|\mathbf{Y})\nabla\log p(\mathbf{X}|\mathbf{Y})^\dagger],
\end{equation*}
and the minimum mean squared error (MMSE) matrix is 
\begin{equation*}
\hspace{2em}\mathrm{mmse}(\mathbf{X}|\mathbf{Y}) := \mathrm{E}[(\dv X-\mathrm{E}[\dv X|\dv Y])(\dv X-\mathrm{E}[\dv X|\dv Y])^\dagger]. \hspace{2em}\bqed
\end{equation*}
\end{lemma}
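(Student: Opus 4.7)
The plan is to treat the two bounds separately, since the upper bound is a pure MMSE/maximum-entropy statement while the lower bound is the Dembo--Cover--Thomas Fisher inequality in its conditional matrix form.

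For the upper bound $h(\dv X|\dv Y)\leq \log|(\pi e)\mathrm{mmse}(\dv X|\dv Y)|$, I would proceed in three short steps. First, use the translation invariance of conditional differential entropy: $h(\dv X|\dv Y) = h(\dv X - \mathbb{E}[\dv X|\dv Y]\,|\,\dv Y)$, since subtracting a deterministic function of $\dv Y$ from $\dv X$ does not alter the conditional density. Second, apply ``conditioning reduces entropy'' to drop the conditioning on $\dv Y$, obtaining $h(\dv X - \mathbb{E}[\dv X|\dv Y]\,|\,\dv Y)\leq h(\dv X - \mathbb{E}[\dv X|\dv Y])$. Third, observe that the centered error vector $\dv X - \mathbb{E}[\dv X|\dv Y]$ has covariance matrix exactly $\mathrm{mmse}(\dv X|\dv Y)$, and invoke the maximum-entropy principle for complex random vectors of given covariance (attained by the circularly-symmetric Gaussian), giving $h(\dv X - \mathbb{E}[\dv X|\dv Y])\leq \log|(\pi e)\,\mathrm{mmse}(\dv X|\dv Y)|$.

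For the lower bound $\log|(\pi e)\mathbf{J}^{-1}(\dv X|\dv Y)|\leq h(\dv X|\dv Y)$, I would first establish the unconditional matrix Fisher--entropy inequality of Dembo--Cover--Thomas, $h(\dv X)\geq \log|(\pi e)\mathbf{J}^{-1}(\dv X)|$. The standard route is via a smoothing argument based on the matrix de~Bruijn identity: for $\dv X_t := \dv X + \sqrt{t}\,\dv G$ with $\dv G$ standard circularly-symmetric complex Gaussian independent of $\dv X$, one has $\frac{d}{dt}\,h(\dv X_t) = \operatorname{tr}\bigl(\mathbf{J}(\dv X_t)\bigr)$, and integrating this identity from $t=0$ to $t\to\infty$ along an appropriately scaled Gaussian path, together with the Gaussian asymptotics of $h(\dv X_t)$, yields the matrix version. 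Once the unconditional inequality is in hand, I apply it to the conditional law $p(\dv x|\dv Y=\dv y)$ for each fixed $\dv y$, obtaining the pointwise bound $h(\dv X|\dv Y=\dv y)\geq \log|(\pi e)\mathbf{J}^{-1}(\dv X|\dv Y=\dv y)|$. Taking expectation over $\dv Y$ gives $h(\dv X|\dv Y)\geq n\log(\pi e) - \mathbb{E}_{\dv Y}\!\bigl[\log|\mathbf{J}(\dv X|\dv Y=\dv y)|\bigr]$. I then move the expectation inside the log-determinant via Jensen's inequality, using concavity of $\log|\cdot|$ on the positive-definite cone: $\mathbb{E}_{\dv Y}\!\bigl[\log|\mathbf{J}(\dv X|\dv Y=\dv y)|\bigr]\leq \log\bigl|\mathbb{E}_{\dv Y}[\mathbf{J}(\dv X|\dv Y=\dv y)]\bigr| = \log|\mathbf{J}(\dv X|\dv Y)|$, where the last equality is just the definition of $\mathbf{J}(\dv X|\dv Y)$ given in the statement. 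Rearranging yields the claimed lower bound.

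The main obstacle is the Dembo--Cover--Thomas step itself: the matrix Fisher--entropy inequality does not follow from pure Cramer--Rao/Cauchy--Schwarz reasoning, which only controls covariance, not entropy; it genuinely requires the de~Bruijn smoothing argument (or, equivalently, an entropy-power style inequality) to pass from a second-moment statement to an entropy statement. The remaining manipulations (translation invariance, maximum entropy, Jensen) are elementary, though one should verify the mild regularity conditions -- smoothness and tail decay of the conditional densities needed to justify the integration-by-parts behind the Fisher-information identities -- which hold automatically in the vector Gaussian setting where the lemma is to be applied in the converse of Theorem~\ref{th:GaussSumCap}.
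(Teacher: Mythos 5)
Your proposal is correct. The paper does not actually prove this lemma---it is imported verbatim from Dembo--Cover--Thomas and Ekrem--Ulukus \cite{DCT91,EU14}---and your reconstruction follows exactly the standard route of those references: translation invariance, conditioning-reduces-entropy, and the Gaussian maximum-entropy bound give the MMSE upper bound, while the unconditional matrix Fisher--entropy inequality applied to each conditional law $p(\dv x|\dv y)$, followed by Jensen's inequality for the concave $\log|\cdot|$ on the positive-semidefinite cone, gives the lower bound. The one step you leave underspecified is the passage from the de Bruijn derivative identity to $h(\dv X)\geq \log|(\pi e)\mathbf{J}^{-1}(\dv X)|$: integrating de Bruijn alone does not suffice, since one also needs Stam's Fisher-information convolution inequality $\mathbf{J}^{-1}(\dv X+\sqrt{t}\,\dv G)\succeq \mathbf{J}^{-1}(\dv X)+t\dv I$ to control $\operatorname{tr}\mathbf{J}(\dv X_t)$ along the smoothing path (equivalently, as in \cite{DCT91}, one differentiates the entropy power inequality at $t=0$ and reduces the matrix form to the trace form via the change of variables $\dv X\mapsto \mathbf{J}(\dv X)^{1/2}\dv X$); since you explicitly flag this as the crux and name the EPI as an equivalent substitute, this is a presentational rather than a logical gap.
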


\noindent First, we derive an outer bound on~\eqref{eq:GaussSumCap_1} as follows. For each $ q\in \mc{Q}$ and fixed pmf $\prod_{k=1}^Kp(u_k|\mathbf{y}_k,q)$, choose $\dv\Omega_{k,q}$, $k\in \mc K$, satisfying $\mathbf{0}\preceq \dv\Omega_{k,q} \preceq\mathbf{\Sigma}_{k}^{-1}$ such that 
\begin{equation}
\mathrm{mmse}(\mathbf{Y}_k|\mathbf{X}, U_{k,q},q) = \mathbf{\Sigma}_{k}-\mathbf{\Sigma}_{k} \dv\Omega_{k,q} \mathbf{\Sigma}_{k}.\label{eq:covB}
\end{equation}
Such $\dv\Omega_{k,q}$ always exists since, for all $q\in \mc Q$, $k\in \mc K$, we have
\begin{equation*}
\mathbf{0}\preceq\mathrm{mmse}(\mathbf{Y}_k|\mathbf{X},U_{k,q},q)\preceq \mathbf{\Sigma}_{\mb y_k|\dv x}=\mathbf{\Sigma}_{k}.\label{eq:CovKconst}
\end{equation*}

\noindent Then, for $k\in \mc K$ and $q\in \mc Q$, we have
\begin{align}
I(\mathbf{Y}_k&;U_k|\mathbf{X},Q=q)
= \log|(\pi e)\boldsymbol\Sigma_{k}| -h(\mathbf{Y}_k|\mathbf{X},U_{k,q},Q=q) \nonumber\\
&\stackrel{(a)}{\geq} \log|\boldsymbol\Sigma_{k}| -\log|\mathrm{mmse}(\mathbf{Y}_k|\mathbf{X},U_{k,q},Q=q)| \nonumber\\
&\stackrel{(b)}{=} -\log|\dv I- \dv\Omega_{k,q}\dv\Sigma_k|, \label{eq:firstIneq}
\end{align}
where $(a)$ is due to Lemma~\ref{lem:FI_Ineq}; and $(b)$ is due to \eqref{eq:covB}.

\noindent On the other hand, for $q\in \mc Q$ and $\mc {S}\subseteq \mc K$, we have
\begin{align}
h(\mathbf{X}|&U_{S^c,q},Q=q)
\stackrel{(a)}{\geq}  \log|(\pi e)\mathbf{J}^{-1}(\mathbf{X}|U_{S^c,q},q)| \nonumber\\
& \stackrel{(b)}{=}  \log 
\big| (\pi e)\big(\mathbf{\Sigma}_{\mb x}^{-1} + \sum\nolimits_{k\in\mathcal{S}^{c}}\mathbf{H}_{k}^{\dagger}
\dv\Omega_{k,q} \mathbf{H}_{k}\big)^{-1}\big|, \label{eq:FI_Ineq}  
\end{align}
where $(a)$ follows from Lemma~\ref{lem:FI_Ineq}; and for $(b)$, we use the connection of the MMSE and the Fisher information to show the following equality, whose proof is provided in the extended version~\cite{PaperFull}.
\begin{align}
\mathbf{J}(\mathbf{X}|U_{S^c,q},q) = \mathbf{\Sigma}_{\mb x}^{-1} + \sum\nolimits_{k\in\mathcal{S}^{c}}\mathbf{H}_{k}^\dagger
\dv\Omega_{k,q} \dv H_k \label{eq:Fischerequality}.
\end{align}

Next, we average~\eqref{eq:firstIneq} and~\eqref{eq:FI_Ineq} over the time sharing $Q$ and letting $\dv\Omega_k := \sum_{q\in \mathcal{Q}}p(q) \dv\Omega_{k,q}$, we obtain the lower bound 
\begin{align}
I(\mathbf{Y}_k;U_k|\mathbf{X},Q) &= \sum\nolimits_{q\in \mathcal{Q}}p(q)I(\mathbf{Y}_k;U_k|\mathbf{X},Q=q)\nonumber\\
&\stackrel{(a)}{\geq} - \sum\nolimits_{q\in \mathcal{Q}}p(q)\log|\dv I- \dv\Omega_{k,q} \dv\Sigma_k| \nonumber\\
&\stackrel{(b)}{\geq} -\log |\dv I- \dv\Omega_k \dv\Sigma_k|,\label{eq:logDetProp2}
\end{align}
where $(a)$ follows from~\eqref{eq:firstIneq}; and $(b)$ follows from the concavity of the log-det function and Jensen's Inequality. 

\noindent Besides,  we can derive the following lower bound
{\footnotesize
\begin{align}
h(\mathbf{X}|U_{S^c},Q)
&\stackrel{(a)}{\geq} \sum_{q\in \mathcal{Q}}p(q) 
\log 
\big|(\pi e)\big(\mathbf{\Sigma}_{\mb x}^{-1} + \sum\nolimits_{k\in\mathcal{S}^{c}}\mathbf{H}_{k}^\dagger
\dv\Omega_{k,q} \mathbf{H}_{k}\big)^{-1}\big| \nonumber\\
&\stackrel{(b)}{\geq} \log \big|(\pi e)\big(\mathbf{\Sigma}_{\mb x}^{-1} + \sum\nolimits_{k\in\mathcal{S}^{c}}\mathbf{H}_{k}^\dagger  \dv\Omega_k \mathbf{H}_{k}
\big)^{-1}\big|, \label{eq:secondtIneq_4}
\end{align}}
\hspace{-2mm} where $(a)$ is due to~\eqref{eq:FI_Ineq}; and $(b)$ is due to the concavity of the log-det function and Jensen's inequality. 

Finally, the outer bound on $\mc{RD}_{\mathrm{L}}^{\star}$ is obtained by applying~\eqref{eq:logDetProp2} and~\eqref{eq:secondtIneq_4} in~\eqref{eq:GaussSumCap_1}, noting that $\dv\Omega_k = \sum_{q\in \mathcal{Q}}p(q) \dv\Omega_{k,q} \preceq\mathbf{\Sigma}_{k}^{-1}$ since $\mathbf{0} \preceq \dv\Omega_{k,q} \preceq\mathbf{\Sigma}_{k}^{-1}$, and taking the union over $\dv\Omega_k$ satisfying \mbox{$\mathbf{0} \preceq \dv\Omega_k \preceq\mathbf{\Sigma}_{k}^{-1}$}. 

\subsection{Proof of Lemma~\ref{lemma-connection}}\label{secV_subsecB}

Let a tuple $(R_1,\ldots,R_K,D) \in \mc{RD}^{\star}_\mathrm{Q}$ be given. Then, there exist a blocklength $n$, $K$ encoding functions $\{\breve{\phi}^{(n)}_k\}^K_{k=1}$ and a decoding function $\breve{\psi}^{(n)}$ such that    
\begin{align}
R_k &\geq \frac{1}{n}\log M^{(n)}_k, \quad \text{for } k=1,\ldots,K, \nonumber\\
D &\geq 
\big|\frac{1}{n} \sum\nolimits_{i=1}^{n} \mathrm{mmse} (\dv X_i| \breve{\phi}^{(n)}_1  (\dv Y_1^n), \ldots, \breve{\phi}^{(n)}_K (\dv Y_K^n))  \big|. 
\label{equation-log-quadrtic-1}
\end{align} 

\noindent We need to show that there exist $(U_1,\ldots,U_K,Q)$ such that, for all subsets $\mc S \subseteq \mc K$,
\begin{equation}
\sum_{k \in \mc S} R_k + \log ({{\pi}e})^{n_x} D \geq \sum_{k \in \mc S} I(U_k;\dv Y_k|\dv X,Q) + h(\dv X| U_{\mc S^c},Q). 
\label{required-existence-condition-proof-lemma}
\end{equation}

\noindent Let us define  
\begin{equation*}
\bar{\Delta}^{(n)} := \frac{1}{n} h(\dv X^n | \breve{\phi}^{(n)}_1  (\dv Y_1^n), \ldots, \breve{\phi}^{(n)}_K (\dv Y_K^n)).
\end{equation*}
\noindent  It is easy to justify that expected distortion $\bar{\Delta}^{(n)}$ is achievable under logarithmic loss (see Proposition~\ref{proposition-continous-RD1-CEO}). Then, following straightforwardly the lines in the proof of~\cite[Theorem 10]{CW14}, we have
\begin{align}~\label{equation-log-quadrtic-2}
\sum\nolimits_{k \in \mc S} R_k  \geq & \: \sum\nolimits_{k \in \mc S} \frac{1}{n} \sum\nolimits_{i=1}^{n} I(\dv Y_{k,i};U_{k,i}|\dv X_i,Q_i) \nonumber\\
& + \frac{1}{n} \sum\nolimits_{i=1}^{n} h(\dv X_i | U_{\mc S^c,i},Q_i) - \bar{\Delta}^{(n)}.
\end{align}

Next, we upper bound $\bar{\Delta}^{(n)}$ in terms of $D$. Letting $J_{\mc K}:=(\breve{\phi}^{(n)}_1  (\dv Y_1^n), \ldots, \breve{\phi}^{(n)}_K (\dv Y_K^n))$, we have
\begin{align} 
\hspace{-0.3em}\bar{\Delta}^{(n)} &=\: \frac{1}{n} h(\dv X^n | J_{\mc K} ) =  \frac{1}{n} \! \sum\nolimits_{i=1}^{n} \! h(\dv X_i|\dv X_{i+1}^n, J_{\mc K} ) \nonumber\\
&= \:  \frac{1}{n} \sum\nolimits_{i=1}^{n} h(\dv X_i - \mathbb{E}[\dv X_i | J_{\mc K} ] \big| \dv X_{i+1}^n, J_{\mc K}) \nonumber\\
& \stackrel{(a)}{\leq} \:  \frac{1}{n} \sum\nolimits_{i=1}^{n} h(\dv X_i - \mathbb{E}[\dv X_i|J_{\mc K}] ) \nonumber\\
& \stackrel{(b)}{\leq} \: \frac{1}{n} \sum\nolimits_{i=1}^{n} \log (\pi e)^{n_x} \left| \mathrm{mmse}(\dv X_i|J_{\mc K}) \right| \nonumber\\
& \stackrel{(c)}{\leq}  \: \log ({{\pi}e})^{n_x} \big| \frac{1}{n} \sum_{i=1}^{n}  \mathrm{mmse}(\dv X_i|J_{\mc K}) \big| \stackrel{(d)}{\leq}    \log ({{\pi}e})^{n_x} D,
 \label{equation-log-quadrtic-3} 
\end{align}
where  $(a)$  holds since conditioning reduces entropy; $(b)$ is due to the maximal differential entropy lemma; $(c)$ is due to the convexity of the log-det function and Jensen's inequality; and $(d)$ is due to~\eqref{equation-log-quadrtic-1}. 

\noindent Combining~\eqref{equation-log-quadrtic-3} with~\eqref{equation-log-quadrtic-2}, and using standard arguments for single-letterization, we get~\eqref{required-existence-condition-proof-lemma}; and this completes the proof of the lemma.  

\vspace{-0.5em}
\subsection{Proof of Theorem~\ref{theorem-rate-distortion-region-ceo-det-quadratic-constraint}}\label{secV_subsecC}

The proof is as follows. By Lemma~\ref{lemma-connection} and Proposition~\ref{th:alternative}, there must exist Gaussian test channels $(V^\mathrm{G}_1,\ldots,V^\mathrm{G}_K)$ and a time-sharing random variable $Q'$, with joint distribution that factorizes as $p(q') p(\dv x) \prod\nolimits_{k=1}^K p(\dv y_k|\dv x) \prod\nolimits_{k=1}^K p(v_k|\dv y_k,q')$, such that the following holds for all subsets $\mc S \subseteq \mc K$, 
\begin{align} 
\sum\nolimits_{k \in \mathcal{S}} R_k &\geq I(\dv Y_{\mc S};V^\mathrm{G}_{\mc S}|V^\mathrm{G}_{\mc S^{c}},Q')\\
\log ((\pi e)^{n_x} D &\geq h(\dv X| V^\mathrm{G}_1,\ldots,V^\mathrm{G}_K,Q').
\label{optimal-rate-distortion region}
\end{align}

\noindent This is clearly achievable by the Berger-Tung coding scheme with Gaussian test channels and time-sharing $Q'$, since the achievable error matrix under quadratic distortion has determinant that satisfies
\begin{align}
\log((\pi e)^{n_x} | \mathrm{mmse}(\dv X|V^\mathrm{G}_1,\ldots,V^\mathrm{G}_K,Q')|) = h(\dv X|V^G_1,\ldots,V^G_K,Q').\nonumber
\end{align}

\noindent The above shows that the rate-distortion region of the quadratic vector Gaussian CEO problem under determinant constraint is given by~\eqref{optimal-rate-distortion region}, i.e., $\mc{RD}^{\mathrm{II}}_\mathrm{L}$ (with distortion parameter $\log({\pi}e)^{n_x}D$). Recalling that $\mc{RD}^{\mathrm{II}}_L=\mc{RD}^{\mathrm{I}}_\mathrm{L}=\mc{RD}^{\star}_\mathrm{L}$, and substituting in Theorem~\ref{th:GaussSumCap} using distortion level $\log({\pi}e)^{n_x}D$ completes the proof.

\bibliographystyle{IEEEtran}
\bibliography{IEEEabrv,mybibfile}
\end{document}